%%%%%%%%%%%%%%%%%%%%%%%%%%%%%%%%%%%%%%%%%%%%%%%%%%%%%%%%%%%%%%%%%%%%%%%%%%%%%%%%
%2345678901234567890123456789012345678901234567890123456789012345678901234567890
%        1         2         3         4         5         6         7         8

\documentclass[letterpaper, 10 pt, conference]{ieeeconf}  % Comment this line out if you need a4paper

\IEEEoverridecommandlockouts                              % This command is only needed if
                                                          % you want to use the \thanks command

\overrideIEEEmargins                                      % Needed to meet printer requirements.

%In case you encounter the following error:
%Error 1010 The PDF file may be corrupt (unable to open PDF file) OR
%Error 1000 An error occurred while parsing a contents stream. Unable to analyze the PDF file.
%This is a known problem with pdfLaTeX conversion filter. The file cannot be opened with acrobat reader
%Please use one of the alternatives below to circumvent this error by uncommenting one or the other
%\pdfobjcompresslevel=0
%\pdfminorversion=4

% See the \addtolength command later in the file to balance the column lengths
% on the last page of the document

% The following packages can be found on http:\\www.ctan.org
%\usepackage{graphics} % for pdf, bitmapped graphics files
%\usepackage{epsfig} % for postscript graphics files
%\usepackage{mathptmx} % assumes new font selection scheme installed
%\usepackage{times} % assumes new font selection scheme installed
%\usepackage{amsmath} % assumes amsmath package installed
%\usepackage{amssymb}  % assumes amsmath package installed

\usepackage{amsmath}
\usepackage{amsfonts}
\usepackage{amssymb}
\usepackage{diagram}
% \usepackage{natbib}
%加粗
\usepackage{bm}
%花体
\usepackage{mathrsfs}
%参考文献
% \usepackage[super,square,comma,sort&compress]{natbib}
%插入图片
\usepackage{graphicx}
\usepackage{subfigure}
\usepackage{wrapfig}
\usepackage{algorithm}
\usepackage{algpseudocode}
\usepackage{color}
\usepackage{listings}
\allowdisplaybreaks[4]
\newenvironment{breakablealgorithm}
{% \begin{breakablealgorithm}
		\begin{center}
			\refstepcounter{algorithm}% New algorithm
			\hrule height.8pt depth0pt \kern2pt% \@fs@pre for \@fs@ruled
			\renewcommand{\caption}[2][\relax]{% Make a new \caption
				{\raggedright\textbf{Algorithm~\thealgorithm} ##2\par}%
				\ifx\relax##1\relax % #1 is \relax
				\addcontentsline{loa}{algorithm}{\protect\numberline{\thealgorithm}##2}%
				\else % #1 is not \relax
				\addcontentsline{loa}{algorithm}{\protect\numberline{\thealgorithm}##1}%
				\fi
				\kern2pt\hrule\kern2pt
			}
		}{% \end{breakablealgorithm}
		\kern2pt\hrule\relax% \@fs@post for \@fs@ruled
	\end{center}
}
\makeatother

\DeclareMathOperator{\diag}{diag}
\def\({\left(}
\def\){\right)}

\newtheorem{assumption}{Assumption}
\newtheorem{thm}{Theorem}%[section]
\newtheorem{rema}{Remark}%[section]
%[section]
%[section]

% \let\NAT@parse\undefined

\begin{document}

\title{\LARGE \bf
 Quantized Distributed Estimation with Event-triggered Communication and Packet Loss
}

\author{Ying Wang$^{1,2}$, Yanlong Zhao$^{2,3}$, Ji-Feng Zhang$^{4,2,3}$ and Karl Henrik Johansson$^{1}$% <-this % stops a space
\thanks{*This work is supported in part by Swedish Research Council Distinguished Professor Grant 2017-01078, Knut and Alice Wallenberg Foundation Wallenberg Scholar Grant, and the Swedish Strategic Research Foundation FUSS SUCCESS Grant. This work is supported in part by National Natural Science Foundation of China under Grants 62025306, 62433020, 62303452, T2293770 and 62473040, and CAS Project for Young Scientists in Basic Research under Grant YSBR-008.}% <-this % stops a space
\thanks{$^{1}$Division of Decision and Control Systems, KTH Royal Institute of Technology, Stockholm 11428, Sweden. {\tt\small wangying96@amss.ac.cn}, {\tt\small kallej@kth.se}}%
\thanks{$^{2}$ The State Key Laboratory of Mathematical
Sciences, Institute of Systems Science, Academy of Mathematics and Systems Science, Chinese Academy of Sciences, Beijing 100190, P. R. China. {\tt\small ylzhao@amss.ac.cn}}%
\thanks{$^{3}$School of Mathematical Sciences, University of Chinese Academy of Sciences, Beijing 100049, P. R. China.}%
\thanks{$^{4}$School of Automation and Electrical Engineering, Zhongyuan University of Technology, Zhengzhou 450007, P. R. China
          {\tt\small jif@iss.ac.cn} }%
}

\maketitle
\thispagestyle{empty}
\pagestyle{empty}

%%%%%%%%%%%%%%%%%%%%%%%%%%%%%%%%%%%%%%%%%%%%%%%%%%%%%%%%%%%%%%%%%%%%%%%%%%%%%%%%
\begin{abstract}
This paper focuses on the problem of quantized distributed estimation with event-triggered communication and packet loss, aiming to reduce the number of transmitted bits.
The main challenge lies in the inability to differentiate between an untriggered event and a packet loss occurrence.
This paper proposes an event-triggered distributed estimation algorithm with quantized communication and quantized measurement,
in which it introduces a one-bit information reconstruction method to deal with packet loss.
The almost sure convergence and convergence rate of the proposed algorithm are established.
Besides, it is demonstrated that the global average communication bit-rate decreases to zero over time.
Moreover, the trade-off between communication rate and convergence rate is revealed,  providing guidance for designing the communication rate required to achieve the algorithm's convergence rate.
A numerical example is supplied to validate the findings.
\end{abstract}

%%%%%%%%%%%%%%%%%%%%%%%%%%%%%%%%%%%%%%%%%%%%%%%%%%%%%%%%%%%%%%%%%%%%%%%%%%%%%%%%
\section{INTRODUCTION}
Wireless sensor networks (WSNs) are widely utilized in fields such as industry (process automation monitoring), environment (fire detection), traffic (traffic conditions detection) due to their ease of deployment, flexibility and reliability \cite{ASSC2002,LWHCG2025,GHZDY2020}.
However, to maintain these advantages, WSNs are typically battery-powered, which imposes energy and communication constraints \cite{GHZDY2020,RSCB2014}. Consequently, the data they transmit and measure are often quantized data with finite bits.
Therefore, quantized distributed estimation over sensor networks has become a critical theoretical issue in WSNs research and has attracted numerous attention \cite{DKMRS2010,DWDG2017,FCZ2022,KMR2012,ZCGXLJ2018,WGZZ2024}.
Existing works on quantized distributed estimation generally falls into three categories. The first is distributed estimation with quantized measurements and accurate communication \cite{DWDG2017,FCZ2022}.  The second is distributed estimation with accurate measurements and quantized communication \cite{KMR2012,ZCGXLJ2018}.  The third is distributed estimation with quantized measurements and quantized communication \cite{WGZZ2024}.

It is well-known that data transmission consumes more energy than local data processing in WSNs \cite{PK2020}. To conserve energy of WSNs, data transmission must be minimized as much as possible.
Event-triggered mechanism is a highly effective approach for reducing data transmission and finds broad applications in areas such as consensus control \cite{DFJ2011, LWZ2024}, adaptive tracking \cite{XZZ2025}, distributed estimation \cite{HXWJ2022, KLZZ2024}, and distributed optimization \cite{KHT2018, YXHC2024}, among others.
For example, an event-triggered distributed estimation algorithm is proposed under accurate communication, and its communication rate is shown to decay to zero \cite{HXWJ2022}. Furthermore, it is extended to quantized communication cases and is based only on the decaying communication bit rate \cite{KLZZ2024}.

Packet loss is a common communication uncertainty in WSNs, often resulting from channel interference, network congestion, and environmental factors \cite{SAS2006,ACP2019,WWXZ2025}. It not only leads to data loss but also significantly impacts system performance.
This paper will explore the problem of distributed estimation with quantized measurements and quantized communication under an event-triggered mechanism in the presence of packet loss. Compared with the existing works \cite{HXWJ2022,KLZZ2024}, the primary challenge of this paper lies in the sensor's inability to discern whether the absence of received data is due to an untriggered event or packet loss.
To address this, the paper proposes an effective information compensation method, enabling quantized distributed estimation with decaying communication bits, even in scenarios involving packet loss.
The main contributions are as follows.
\begin{itemize}
\item  An event-triggered quantized distributed estimation algorithm is proposed to achieve distributed estimation under packet loss, which only depends on the decaying bit-rate communication. In particular, a stochastic event-triggered communication mechanism with a Laplace dither and a growing threshold is applied to guarantee the decaying communication bit rate, and a one-bit information reconstruction method to compensate the loss caused by packet loss in the expected sense.

\item The convergence properties and communication rate of the proposed algorithm are established. Specifically, this paper establishes the almost sure convergence under cooperative excitation condition, with convergence rate close to $O(\sqrt{\frac{\ln k}{k^{1-\nu}}})$. Besides, the global average communication bit-rate is shown as $O(\frac{1}{k^{\nu}})$, decreasing to zero over time. In addition, this paper reveals the quantitative relationship that constrains communication rate and convergence rate, guiding the design of communication rate to achieve required convergence rate.
\end{itemize}

The remainder of this paper is organized as follows. Section II provides  problem formulation. Section III introduces the algorithm design. The main results are presented in Section IV,  which  includes convergence and communication rate.  Section V gives a simulation example. Section VI summarizes the conclusion and provides future directions.

\section{Problem formulation}

\subsection{Graph theory}

In order to describe the relationship between sensors, an undirected graph $\mathcal{G}=(\mathcal{V},\mathcal{E},\mathcal{A})$ is introduced here, where $\mathcal{V}=\{1,2,\ldots,m\}$ is the set of sensors and $\mathcal{E}\in\mathcal{V}\times\mathcal{V}$ is the edge set describing the communication between sensors.
The adjacency matrix $\mathcal{A}=\{a_{ij}\}\in\mathbb{R}^{m\times m}$ describes the structure of the graph $\mathcal{G}$, where $a_{ij}>0$ if $(j,i)\in \mathcal{E}$, and $a_{ij}=0$, otherwise. The adjacency matrix $\mathcal{A}$ satisfies $a_{ij}=a_{ji}$ for all $i,j=1,\ldots,m$.
Besides, the set of the neighbors of sensor $i$ is denoted as $\mathcal{N}_i=\{j\in \mathcal{V}|(i,j)\in\mathcal{E}\}$. The laplacian matrix of $\mathcal{G}$ is defined as $\mathcal{L}=\sum_{i=1}^ma_{ij}-\mathcal{A}$.

\subsection{System description}

Consider a multi-agent network consisting of $m$ sensors, the dynamic of the $i$-th $(i=1,\ldots,m)$ sensor at time $k$ is a linear stochastic model with quantized measurements as:
\begin{equation}\label{M}
\begin{cases}
y_{k,i}=\phi_{k,i}^T\theta+d_{k,i},\\
s_{k,i}=I_{\{y_{k,i}\leq C_i\}},
\end{cases}
\end{equation}
where $\phi_{k,i}\in \mathbb{R}^n$ is an $n$-dimensional regressor, $d_{k,i}$ is a stochastic noise and $\theta\in\mathbb{R}^n$ is an unknown parameter vector to be estimate. And $y_{k,i}$ is a scalar measurement of sensor $i$, which can not be exactly measured but be measured by a quantized measurement $s_{k,i}$, where $C_i$ is its fixed threshold.

In order to save communication cost, one-bit encoder is to be designed to determine how the estimate $\hat\theta_{k-1,i}$ of the sensor $i$ is transmitted to its neighbor $j\in\mathcal{N}_i$ by virtue of one-bit data $z_{k,i}$ as follows.
\begin{equation}\label{en_z}
z_{k,i}=Q(\hat\theta_{k-1,i}),
\end{equation}
where $Q(\cdot):\mathbb{R}^n\rightarrow\mathbb{R}$ is a compression coding function, which could transform an vector to a one bit data.

Then, to further reduce communication traffic, an event-triggered mechanism $\gamma_{k,i}^e$ is used to determine whether the one-bit encoding data $z_{k,i}$ of the sensor $i$ should be transmitted to its neighbors $j\in\mathcal{N}_i$.

During the transmission of $z_{k,i}$,  the data packet loss on the channel $(i,j)\in\mathcal{E}$ is expressed as
\begin{equation}\label{pl_gamma}
\gamma_{k,ji}^d=
\begin{cases}
0, & \text { data packet loss on the channel } (i,j);  \\
1, & \text { otherwise}.
\end{cases}
\end{equation}
Thus, the indicator whether the neighbor sensor $j$ receives one-bit data $z_{k,i}$ from sensor $i$ is given by $\gamma_{k,ji}=\gamma_{k,ji}^d \gamma_{k,i}^e$. But the neighbor $j$ cannot distinguish between $\gamma_{k,ji}^d$ and $\gamma_{k,i}^e$, i.e., the data received by the sensor $j$ is $\left\{\gamma_{k,ji}, \gamma_{k,ji} z_{k,i}\right\}$.

\subsection{Assumptions}
To proceed our analysis, we introduce some assumptions about the graph, the regressor, the noise and packet loss.

\begin{assumption}\label{AG}
The graph $\mathcal{G}$ is connected.
\end{assumption}

\begin{assumption}\label{AR}
\textbf{(Cooperative excitation condition)}
The regressor $\{\phi_{k,i}\}$ satisfy $\|\phi_{k,i}\| \leq \bar{\phi}<\infty$ and there exist a positive integer $h$ and a positive number $\delta_{\phi}$ such that
\vspace{-5pt}
\[
\sum^{m}_{i=1} \sum^{k+h-1}_{l=k}\phi_{l,i}\phi_{l,i}^T\geq\delta_{\phi}^2I_{n\times n}, \quad \forall k\geq 1.
\]
\end{assumption}

\begin{assumption}\label{AP}
Let $\Omega\subset\mathbb{R}^n$ be a known bounded, convex, and compact set such that $\theta\in \Omega$, and let $\bar{\theta}=\sup_{\eta\in\Omega}\|\eta\|.$
\end{assumption}

\begin{assumption}\label{AD}
The conditional distribution function of the noise $d_{k,i}$ given $\mathcal{F}_{k-1}$ is denoted by $F_{k,i}(\cdot)$,  and its condition density function $f_{k,i}(x)=\frac{dF_{k,i}(x)}{dx}$ satisfies
\vspace{-5pt}
\[
\underline{f}=\inf_{k\geq 1}\min_{1\leq i\leq m}\min_{|C_i-x|\leq\bar{\phi}\bar{\theta}} f_{k,i}(x)>0,
\]
where  $\mathcal{F}_k=\{d_{l,i}, i=1,\ldots,m, 0\leq l\leq k\}$.
\end{assumption}

%\begin{rema}\label{RAR}
% The prior information condition of unknown parameter (i.e., Assumption \ref{AP}) is a common assumption of quantized identification studies \cite{GZ2013,WZZG2022,WZZ2021,ZWZ2021,ZZG2022}, which is mainly used to guarantee the boundness of the estimate.
 %Compared with \cite{GZ2013,WZZG2022,WZZ2021,ZWZ2021,FCZ2022,You2015}, the condition of independently and identically distributed noise is no longer required in Assumption \ref{AD}.
 %The assumptions about prior information of unknown parameter and noises are common in the previous quantized identification study\cite{WZZG2022}. Besides, there are many common noises conforming Assumption \ref{AD}, such as Gaussian noises, Laplace noises and so on.
%\end{rema}

\begin{assumption}\label{ADP}
For all indices $i$ and $j$, the packet loss sequence $\{\gamma_{k,ij}^d\}$ subjects to i.i.d. Bernoulli process with probability $p\in(0,1)$, i.e., $\mathbb{P}(\gamma_{k,ij}^d=0)=p$ and $\mathbb{P}(\gamma_{k,ij}^d=1)=1-p$. The packet loss $\{\gamma_{k,ij}^d\}$ is independent of the noise $\{d_{k,i}\}$ and the event-triggered mechanism $\{\gamma_{k,i}^e\}$.
\end{assumption}

The goal of this paper is to develop a distributed estimation algorithm with a one-bit encoder and an event-triggered communication mechanism to estimate the unknown parameter based on quantized measurement and quantized communication under packet loss.

\section{Algorithm Design}
This  section  focuses  on  designing  an event-triggered quantized distributed estimation algorithm that can deal with information loss caused by packet loss and achieve distributed estimation with as few transmission bits as possible.
The algorithm is constructed as follows.

\begin{breakablealgorithm}
\caption[]{} %ETQD algorithm
\label{Algorithm}
For any given sensor $i\in\{1,2,\ldots,m\}$, begin with initial estimates $\hat{\theta}_{0,i}\in\mathbb{R}^n$. The algorithm is defined as follows:
\begin{algorithmic}[1]
\State Design of one-bit encoder: based on the generated dither $\omega_{k,i}$ with Laplace distribution $Lap(0,1)$,  the sensor $i$ encodes its estimate $\hat\theta_{k-1,i}$ into one-bit  data as:
\begin{align}\label{Aen_z}
z_{k,i}&=Q(\hat\theta_{k,i-1})=1-2I_{\{\psi_{k}^T\hat\theta_{k-1,i}+\omega_{k,i}\leq0\}}, \nonumber\\
&=\left\{
   \begin{array}{ll}
     1, & \hbox{if } ~\psi_{k}^T\hat\theta_{k-1,i}+\omega_{k,i}>0; \\
     -1, & \hbox{if } ~ \psi_{k}^T\hat\theta_{k-1,i}+\omega_{k,i}\leq0,
   \end{array}
 \right.
\end{align}
where the coding rule $\{\psi_{k}\}$ satisfy $\|\psi_{k}\| \leq \bar{\psi}<\infty$ and there exists a positive constant $\delta_{\psi}$ such that $\frac{1}{h}\sum^{k+h-1}_{l=k}\psi_{l}\psi_{l}^T\geq\delta_{\psi}^2I_{n\times n}$ for all $k\geq 1$.

\State Design of event-triggered communication mechanism:
\begin{align}\label{Aga_e}
\gamma_{k,i}^e
=\left\{
   \begin{array}{ll}
1, & \hbox { if } ~|\psi_{k}^T\hat\theta_{k-1,i}+\omega_{k,i}|>\hat{C}_k;  \\
0, & \hbox { otherwise, }
   \end{array}
 \right.
\end{align}
where $\hat{C}_k=\nu\ln k$; when $\gamma_{k,i}^e=1$, then $z_{k,i}$ will be sent to its all neighboring sensor $j\in \mathcal{N}_{i}$. Otherwise, $z_{k,i}$ cannot be transmitted to its any neighboring sensor.

\State Reconstruction of one-bit neighboring information:
\begin{align}\label{Aest_z}
\hat{s}_{k,ij}=\frac{\gamma_{k,ij} z_{k,j}}{1-p},
\end{align}
where $\gamma_{k,ij}=\gamma_{k,ij}^d\gamma_{k,j}^e$ and $p$ is packet loss probability.

\State Fusion estimation:
\begin{align}\label{AE}
\hat{\theta}_{k,i}=\Pi_{\Omega}\big\{&\hat{\theta}_{k-1,i}+\frac{\beta\phi_{k,i}}{k} (\hat{F}_{k,i}-s_{k,i})\nonumber\\
&+\frac{\alpha}{k^{1-\nu}} \sum_{j\in \mathcal{N}_{i}}a_{ij} \psi_k(\hat{s}_{k,ij}-\hat{G}_{k,i})\big\},
\end{align}
where the projection $\Pi_{\Omega}(\cdot)$ is defined as $\Pi_{\Omega}(x)= \arg\min_{z\in\Omega}\|x-z\|$ for all $x\in\mathbb{R}^n$; $\beta>0$ and $\alpha>0$ are step coefficients; $\hat{F}_{k,i}=F_{k,i}(C_i-\phi^T_{k,i}\hat\theta_{k-1,i})$ and $\hat{G}_{k,i}=G_{k}(\psi_k^T \hat{\theta}_{k-1,i})=G(\psi_k^T \hat{\theta}_{k-1,i}-\hat{C}_k)-G(-\psi_k^T \hat{\theta}_{k-1,i}-\hat{C}_k)$, where $G(\cdot)$ is the  distribution function of Laplace distribution $Lap(0,1)$.
\end{algorithmic}
\end{breakablealgorithm}

\begin{rema}
The proposed algorithm is developed based on the consensus-type distributed stochastic approximation algorithm including local information processing term and global information consensus term, with two key highlights:
\begin{itemize}
\item Design of a one-bit communication protocol with event-triggered mechanism: The protocol (\ref{Aen_z})-(\ref{Aga_e}) transforms an accurate estimate into data less than one-bit, inspired by \cite{KLZZ2024}. Specifically, the one-bit encoding protocol is designed using stochastic dither and a linear encoding rule that satisfies the persistent excitation condition. These elements jointly ensure the accurate restoration of the directional information of the estimate. The event-triggered mechanism incorporates Laplace dither and a growing threshold, effectively reducing the probability of event triggering to zero over time.
\item Construction of a one-bit information reconstruction method: The reconstruction method (\ref{Aest_z}) compensates for information loss caused by packet loss in the expected sense. This is feasible because the distributed algorithm primarily relies on directional information about the difference between the local estimate $\hat\theta_{k,i}$ and its neighboring estimate $\hat\theta_{k,j}$ to design the global consensus term. Specifically, $\hat{s}_{k,ij}-\hat{G}_{k,i}$ provides the directional information in the expected sense, as derived from $\mathbb{E}[\hat{s}_{k,ij}-\hat{G}_{k,i}]=\mathbb{E}{z}_{k,j}-\hat{G}_{k,i}= G_{k}(\psi_k^T \hat{\theta}_{k-1,j})-G_{k}(\psi_k^T \hat{\theta}_{k-1,i})$. This is further supported by the monotonicity of $G_{k}(x)$ and the packet loss property.
\end{itemize}
\end{rema}
\begin{rema}
The coding coefficients $\psi_k$ are designed to satisfy the persistent excitation condition \cite{S_Guo2020}, ensuring recoverability of the original vector from quantized data. A simple choice is to cyclically select $\psi_k$ from the standard basis vectors $\{e_1, \ldots, e_n\} \subset \mathbb{R}^n$, where $e_i$ denotes the $i$th unit vector.
\end{rema}

\section{Main results}
This section is concern with the main conclusions of this paper, including the almost sure convergence properties and the communication rate of the proposed algorithm. And then, we will give an analysis on the trade-off between the convergence rate and the communication rate.

\begin{thm}\label{thm_ac}
If Assumptions \ref{AG}-\ref{ADP} hold, then the proposed algorithm is almost surely convergent.
Moreover, for any $\tau>0$, its almost sure convergence rate  is
\[
\|\tilde{\theta}_{k,i}\|=O\left(\sqrt{\frac{(\ln k)^{1+\tau}}{k^{1-\nu}}}\right), \text{a.s.},
\]
if $2\sigma\geq 1-\nu$, where $\tilde{\theta}_{k,i}=\hat{\theta}_{k,i}-\theta$ is the estimation error, $\sigma=\frac{\alpha\beta h\lambda_2\underline{f}\underline{g}\delta_{\psi}^2 \delta_{\phi}^2}{2\beta\bar{\phi}^2\underline{f} +\alpha \underline{g}h\lambda_2\delta_{\psi}^2}$, $\lambda_2$ is the smallest nonzero eigenvalue of the Laplacian matrix $\mathcal{L}$, $\underline{g}=\inf_{k\geq 1}\min_{x\in[-\bar{\psi}\bar{\theta},\bar{\psi}\bar{\theta}]}k^{\nu}{g}_{k}(x)$,
${g}_{k}(x)=g(x-\hat{C}_k)+g(-x-\hat{C}_k)$, and $g(\cdot)$ is the density function of Laplace distribution $Lap(0,1)$.
\end{thm}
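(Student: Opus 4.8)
The plan is to cast the error dynamics as a coupled stochastic approximation recursion and run a window-based Lyapunov argument. First I would exploit that $\theta\in\Omega$ with $\Omega$ convex, so the projection is nonexpansive, $\|\Pi_\Omega(x)-\theta\|\le\|x-\theta\|$; this lets me drop $\Pi_\Omega$ in every upper bound while keeping $\hat\theta_{k,i}\in\Omega$, which in turn guarantees that $\phi_{k,i}^T\hat\theta_{k-1,i}$ and $\psi_k^T\hat\theta_{k-1,i}$ stay within the ranges on which $\underline f$ and $\underline g$ are defined. Writing $\tilde\theta_{k,i}=\hat\theta_{k,i}-\theta$ and stacking into $\tilde\theta_k$, I would split each increment into its $\mathcal F_{k-1}$-conditional mean (the drift) plus a martingale difference. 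For the local term, $\mathbb E[s_{k,i}\mid\mathcal F_{k-1}]=F_{k,i}(C_i-\phi_{k,i}^T\theta)$, so by the mean value theorem $\mathbb E[\hat F_{k,i}-s_{k,i}\mid\mathcal F_{k-1}]=-f_{k,i}(\xi_{k,i})\phi_{k,i}^T\tilde\theta_{k-1,i}$. For the consensus term, using the independence of packet loss, $\mathbb E[\gamma_{k,ij}^d]=1-p$, and the Laplace-dither computation already recorded in the remark, $\mathbb E[\hat s_{k,ij}-\hat G_{k,i}\mid\mathcal F_{k-1}]=G_k(\psi_k^T\hat\theta_{k-1,j})-G_k(\psi_k^T\hat\theta_{k-1,i})=g_k(\zeta_{k,ij})\psi_k^T(\tilde\theta_{k-1,j}-\tilde\theta_{k-1,i})$, again by the mean value theorem and the monotonicity of $G_k$.

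The second ingredient is the noise scale. The local innovation $\hat F_{k,i}-s_{k,i}$ is bounded, so with gain $\beta/k$ it contributes $O(k^{-2})$ to the Lyapunov increment. The decisive observation is that the consensus innovation is suppressed by the event-triggering rate: since $\omega_{k,i}\sim Lap(0,1)$ and $\hat C_k=\nu\ln k$, one has $\mathbb P(\gamma_{k,i}^e=1\mid\mathcal F_{k-1})=O(k^{-\nu})$, hence $\mathbb E[\hat s_{k,ij}^2\mid\mathcal F_{k-1}]=\mathbb E[\gamma_{k,ij}\mid\mathcal F_{k-1}]/(1-p)^2=O(k^{-\nu})$. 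Multiplied by the squared gain $(\alpha/k^{1-\nu})^2$, the consensus noise contributes only $O(k^{-(2-\nu)})$ per step. Forming $V_k=\|\tilde\theta_k\|^2$ and inserting the two drift identities together with these variance bounds, I obtain a one-step inequality of the form $\mathbb E[V_k\mid\mathcal F_{k-1}]\le V_{k-1}-\frac{2\beta}{k}\sum_i f_{k,i}(\phi_{k,i}^T\tilde\theta_{k-1,i})^2-\frac{2\alpha}{k^{1-\nu}}\sum_{i,j}a_{ij}g_k(\zeta_{k,ij})(\psi_k^T\tilde\theta_{k-1,i})\psi_k^T(\tilde\theta_{k-1,i}-\tilde\theta_{k-1,j})+O(k^{-(2-\nu)})$, where the symmetry $a_{ij}=a_{ji}$ lets me rewrite the consensus drift as the nonnegative quadratic form $\tfrac{\alpha}{k^{1-\nu}}\sum_{i,j}a_{ij}g_k(\zeta_{k,ij})(\psi_k^T(\tilde\theta_{k-1,i}-\tilde\theta_{k-1,j}))^2$.

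The crux, and the hardest step, is to convert these two individually degenerate dissipation terms into a genuine contraction with rate $\sigma$. Neither term alone is positive definite: the local term only penalizes $\tilde\theta_{k-1,i}$ along the rank-one direction $\phi_{k,i}$, and the consensus term only penalizes disagreement along $\psi_k$ and vanishes on the consensus subspace. My plan is to decompose $\tilde\theta_{k-1}$ into its network average $\bar{\tilde\theta}_{k-1}=\tfrac1m\sum_i\tilde\theta_{k-1,i}$ and the disagreement, and argue over a window of length $h$: connectivity ($\lambda_2>0$) together with the coding excitation $\tfrac1h\sum_l\psi_l\psi_l^T\ge\delta_\psi^2 I$ and the bound $g_k\ge\underline g\,k^{-\nu}$ drive the disagreement to zero at the consensus rate, while the cooperative excitation $\sum_i\sum_l\phi_{l,i}\phi_{l,i}^T\ge\delta_\phi^2 I$ with $f_{k,i}\ge\underline f$ makes the averaged local drift nondegenerate in all directions and pulls $\bar{\tilde\theta}$ toward zero at the estimation rate. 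Summing the window inequality and balancing the two mechanisms yields $\mathbb E[V_{k+h}\mid\mathcal F_k]\le(1-\tfrac{2\sigma h}{k})V_k+O(k^{-(2-\nu)})$, and tracking the constants through this balance is exactly what produces the harmonic-type combination $\tfrac1\sigma=\tfrac1{\beta\underline f\delta_\phi^2}+\tfrac{2\bar\phi^2}{\alpha h\lambda_2\underline g\delta_\psi^2\delta_\phi^2}$. I expect the bookkeeping of the cross terms coupling the average and the disagreement, and the uniformity of the window bounds in $k$, to be the delicate part.

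Finally, I would feed the recursion $\mathbb E[V_{k+h}\mid\mathcal F_k]\le(1-\tfrac{2\sigma h}{k})V_k+O(k^{-(2-\nu)})$ into a Robbins--Siegmund-type convergence lemma to obtain almost sure convergence $V_k\to0$. For the rate, solving this recursion along the subsequence $k_n=k_0+nh$ shows the homogeneous part decays like $k^{-2\sigma}$ while the noise-driven particular part is $\Theta(k^{-(1-\nu)})$; the hypothesis $2\sigma\ge1-\nu$ is precisely what makes the noise term dominant, yielding $\mathbb E[V_k]=O(k^{-(1-\nu)})$. Upgrading this mean-square rate to the stated almost sure envelope $\|\tilde\theta_{k,i}\|=O(\sqrt{(\ln k)^{1+\tau}/k^{1-\nu}})$ I would carry out through a martingale estimate, namely a weighted maximal inequality combined with Borel--Cantelli over the window subsequence, where the extra factor $(\ln k)^{1+\tau}$ with arbitrary $\tau>0$ is the standard price for passing from convergence in mean square to an almost sure bound.
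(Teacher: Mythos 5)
Your plan reproduces the paper's proof essentially step for step: nonexpansive projection, conditional-mean drift identities via the mean value theorem for both the binary-measurement term and the packet-loss-compensated consensus term, the $O(k^{-\nu})$ second-moment bound coming from the triggering probability, symmetrization of the consensus drift into a Laplacian quadratic form, a window-of-length-$h$ contraction $\mathbb{E}[\|\tilde{\Theta}_k\|^2\mid\mathcal{F}_{k-h}]\le(1-\tfrac{2\sigma h}{k})\|\tilde{\Theta}_{k-h}\|^2+O(k^{-(2-\nu)})$ with exactly the paper's harmonic-mean constant $\sigma$, and a Robbins--Siegmund-type convergence lemma. The only implementation-level differences are that the paper establishes the joint positive definiteness of $\beta\underline{f}\sum_l\Phi_l\Phi_l^T+\alpha\underline{g}h\delta_\psi^2\,\mathcal{L}\otimes I_n$ by invoking a cited matrix lemma rather than your explicit average/disagreement decomposition, and it obtains the almost sure rate by applying the same supermartingale lemma directly to the rescaled process $\tfrac{(\lceil k/h\rceil+1)^{1-\nu}}{(\ln(\lceil k/h\rceil+1))^{1+\tau}}\tilde{\Theta}_k^T\tilde{\Theta}_k$ rather than your two-stage mean-square-then-maximal-inequality upgrade.
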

\begin{proof}
\textbf{Part I}:  Almost sure convergence.\\
From (\ref{AE}),  Assumptions \ref{AR}-\ref{AP}, Lemma 5.1 in \cite{WZZ2024} and the boundedness of $\hat{F}_{k,i}$ and $\hat{G}_{k,i}$, we have
\vspace{-5pt}
\begin{align}\label{thmmc1}
\tilde{\theta}_{k,i}^T\tilde{\theta}_{k,i}
\leq&\tilde{\theta}_{k-1,i}^T\tilde{\theta}_{k-1,i}+\frac{2\beta}{k} \tilde{\theta}_{k-1,i}^T\phi_{k,i} (\hat{F}_{k,i}-s_{k,i})\nonumber\\
&+\frac{2\alpha}{k^{1-\nu}}  \tilde{\theta}_{k-1,i}^T\sum_{j\in \mathcal{N}_{i}}a_{ij} \psi_k(\hat{s}_{k,ij}-\hat{G}_{k,i})\nonumber\\
&+\frac{2\beta\alpha}{k^{2-\nu}}(\hat{F}_{k,i}-s_{k,i})\phi_{k,i}^T\sum_{j\in \mathcal{N}_{i}}a_{ij} \psi_k(\hat{s}_{k,ij}-\hat{G}_{k,i})\nonumber\\
&+\frac{\beta^2}{k^2}\phi_{k,i}^T\phi_{k,i} (\hat{F}_{k,i}-s_{k,i})^2\nonumber\\
&+\frac{2\alpha^2 \psi_k^T\psi_k}{k^{2-2\nu}}\big(\sum_{j\in \mathcal{N}_{i}}a_{ij}(\hat{s}_{k,ij}-\hat{G}_{k,i})\big)^2\nonumber\\
\leq&\tilde{\theta}_{k-1,i}^T\tilde{\theta}_{k-1,i}+\frac{2\beta}{k} \tilde{\theta}_{k-1,i}^T\phi_{k,i} (\hat{F}_{k,i}-s_{k,i})+\frac{2\alpha}{k^{1-\nu}}\nonumber\\
& \cdot\tilde{\theta}_{k-1,i}^T\psi_k\sum_{j\in \mathcal{N}_{i}}a_{ij} (\hat{s}_{k,ij}-\hat{G}_{k,i})+O\(\frac{1}{k^{2-\nu}}\)\nonumber\\
&+\frac{2\alpha^2 \psi_k^T\psi_k}{k^{2-2\nu}}\big(\sum_{j\in \mathcal{N}_{i}}a_{ij}(\hat{s}_{k,ij}-\hat{G}_{k,i})\big)^2.
\end{align}
By the differential mean value theorem and Assumption \ref{AD},
\begin{align}\label{thmmc2}
&-\mathbb{E}[\tilde{\theta}_{k-1,i}^T\phi_{k,i} (\hat{F}_{k,i}-s_{k,i})|\mathcal{F}_{k-1}]\nonumber\\
%=&-\tilde{\theta}_{k-1,i}^T\phi_{k,i} \(\hat{F}_{k,i}-\mathbb{E}[s_{k,i}|\mathcal{F}_{k-1}]\)\nonumber\\
=&-\tilde{\theta}_{k-1,i}^T\phi_{k,i} \(F_{k,i}(C_i-\phi^T_{k,i}\hat\theta_{k-1,i}) -F_{k,i}(C_i-\phi^T_{k,i}\theta)\)\nonumber\\
=&-f_{k,i}\(\xi_{k,i}\)\tilde{\theta}_{k-1,i}^T\phi_{k,i}\phi^T_{k,i}\tilde\theta_{k-1,i}\nonumber\\
\leq& - \underline{f}\tilde{\theta}_{k-1,i}^T\phi_{k,i}\phi^T_{k,i}\tilde\theta_{k-1,i},
\end{align}
where $\xi_{k,i}$ is between $C_i-\phi^T_{k,i}\hat\theta_{k-1,i}$ and $C_i-\phi^T_{k,i}\theta$ such that $\hat F_{k,i} -F_{k,i}(C_i-\phi^T_{k,i}\theta)=-f_{k,i}\(\xi_{k,i}\)\phi^T_{k,i}\tilde\theta_{k-1,i}$.

Then, by Assumption \ref{ADP},  we have
\begin{align}\label{thmmc3}
&\mathbb{E}[\hat{s}_{k,ij}-\hat{G}_{k,i}|\mathcal{F}_{k-1}]=\mathbb{E}\left[\frac{\gamma_{k,ij}^d\gamma_{k,j}^e z_{k,j}}{1-p}\bigg|\mathcal{F}_{k-1}\right]-\hat{G}_{k,i}\nonumber\\
=&\mathbb{E}\left[\frac{\gamma_{k,ij}^d}{1-p}\right]\mathbb{E}\left[\gamma_{k,j}^e z_{k,j}|\mathcal{F}_{k-1}\right]-\hat{G}_{k,i}\nonumber\\
=&\hat{G}_{k,j}-\hat{G}_{k,i}
={g}_{k}(\zeta_{k,ij})(\tilde{\theta}_{k-1,j}^T\psi_k- \tilde{\theta}_{k-1,i}^T\psi_k).
\end{align}
where $\zeta_{k,ij}$ is between $\psi^T_k\hat\theta_{k-1,i}$ and $\psi^T_k\hat\theta_{k-1,j}$ such that $\hat{G}_{k,j}-\hat{G}_{k,i}= {g}_{k}(\zeta_{k,ij})(\tilde{\theta}_{k-1,j}^T\psi_k- \tilde{\theta}_{k-1,i}^T\psi_k)$. So, ${g}_{k}(\zeta_{k,ij})={g}_{k}(\zeta_{k,ji})$.

From  Lemma A.1 in \cite{KLZZ2024}, we have $\underline{g}=\inf_{k\geq 1}k^{\nu}{g}_{k}(x)>0$ for all $x\in[-\bar{\psi}\bar{\theta},\bar{\psi}\bar{\theta}]$, i.e.,
 \begin{align}\label{gk}
{g}_{k}(x)\geq \frac{\underline{g}}{k^{\nu}}, \forall x\in[-\bar{\psi}\bar{\theta},\bar{\psi}\bar{\theta}].
 \end{align}

Noticing $G(\cdot)$ is the distribution function of $Lap(0,1)$, the boundness of $\psi_k^T \hat{\theta}_{k-1,j}$ and $\hat{C}_k=\nu\ln k$,  we have
\vspace{-5pt}
\begin{align}\label{thmmc4}
&\mathbb{E}\big[(\hat{s}_{k,ij}-\hat{G}_{k,i})^2\big|\mathcal{F}_{k-1}\big]\nonumber\\
\leq&\mathbb{E}\big[\hat{s}_{k,ij}^2-2\hat{s}_{k,ij}\hat{G}_{k,i} +\hat{G}_{k,i}^2\big|\mathcal{F}_{k-1}\big]\nonumber\\
\leq&\hat{G}_{k,j}-2\hat{G}_{k,j}\hat{G}_{k,i}+\hat{G}_{k,i}^2\nonumber\\
=&O\(e^{-\nu\ln k}\)
=O\(\frac{1}{k^{\nu}}\).
\end{align}
From (\ref{gk}), $a_{ij}=a_{ji}$ and ${g}_{k}(\zeta_{k,ij})={g}_{k}(\zeta_{k,ji})$, we have
\begin{align}\label{thmmc5}
&\sum_{i=1}^{m}\tilde{\theta}_{k-1,i}^T\psi_k\sum_{j\in \mathcal{N}_{i}}a_{ij}{g}_{k}(\zeta_{k,ij})(\psi_k^T\tilde{\theta}_{k-1,j}- \psi_k^T\tilde{\theta}_{k-1,i})\nonumber\\
=&-\frac{1}{2}\sum_{i=1}^{m}\sum_{j=1}^{m}a_{ij}{g}_{k}(\zeta_{k,ij})\big((\psi_k^T\tilde{\theta}_{k-1,i})^2+(\psi_k^T\tilde{\theta}_{k-1,j})^2\nonumber\\
&\qquad\qquad\qquad\qquad\qquad\quad-2\tilde{\theta}_{k-1,i}^T\psi_k\psi_k^T\tilde{\theta}_{k-1,j}\big)\nonumber\\
\leq &- \frac{\underline{g}}{2k^{\nu}}\sum_{i=1}^{m}\sum_{j=1}^{m}a_{ij}(\psi_k^T\tilde{\theta}_{k-1,i}- \psi_k^T\tilde{\theta}_{k-1,j})^T\nonumber\\
&\qquad\qquad\qquad\qquad\quad\cdot(\psi_k^T\tilde{\theta}_{k-1,i}- \psi_k^T\tilde{\theta}_{k-1,j}).
\end{align}

For convenience of analysis, we introduce the notations
$\Phi_k=\diag\{\phi_{k,1},\ldots,\phi_{k,m}\}$,
$\hat{\Theta}_k=(\hat\theta_{k,1}^T,\ldots,\hat\theta_{k,m}^T)^T$,
$\tilde{\Theta}_k=(\tilde{\theta}_{k,1}^T,\ldots, \tilde{\theta}_{k,m}^T)^T$,
where $\diag\{\cdot,\ldots,\cdot\}$ denotes the block matrix formed in a diagonal manner of the corresponding vectors or matrices.

Based on (\ref{thmmc1})-(\ref{thmmc3}) and (\ref{thmmc4})-(\ref{thmmc5}), we can get
\vspace{-5pt}
\begin{align}\label{thmac1}
&\mathbb{E}[\tilde{\Theta}_{k}^T\tilde{\Theta}_{k}|\mathcal{F}_{k-h}]
=\mathbb{E}\big[\mathbb{E}[\tilde{\Theta}_{k}^T\tilde{\Theta}_{k}|\mathcal{F}_{k-1}]\big|\mathcal{F}_{k-h}\big]\nonumber\\
\leq &\mathbb{E}\big[\sum_{i=1}^m\tilde{\theta}_{k-1,i}^T\tilde{\theta}_{k-1,i}- \frac{2\beta\underline{f}}{k} \sum_{i=1}^m\tilde{\theta}_{k-1,i}^T\phi_{k,i}\phi^T_{k,i}\tilde\theta_{k-1,i} \nonumber\\
&- \frac{\alpha\underline{g}}{k}\sum_{i=1}^{m}\sum_{j=1}^{m}a_{ij}(\psi_k^T\tilde{\theta}_{k-1,i}- \psi_k^T\tilde{\theta}_{k-1,j})^T\nonumber\\
&\qquad\cdot(\psi_k^T\tilde{\theta}_{k-1,i}- \psi_k^T\tilde{\theta}_{k-1,j})\big|\mathcal{F}_{k-h}\big]+O\(\frac{1}{k^{2-\nu}}\)\nonumber\\
\leq & \mathbb{E}\big[\tilde{\Theta}_{k-1}^T\tilde{\Theta}_{k-1}
-\frac{2\beta \underline{f}}{k}\tilde{\Theta}_{k-1}^T\Phi_{k} \Phi^T_{k}\tilde\Theta_{k-1}\nonumber\\
&-\frac{2\alpha \underline{g}}{k} \tilde{\Theta}_{k-1}^T\mathcal{L}\otimes\psi_{k} \psi^T_{k}\tilde\Theta_{k-1}\big|\mathcal{F}_{k-h}\big]+O\(\frac{1}{k^{2-\nu}}\)\nonumber\\
%\leq&\tilde{\Theta}_{k-h}^T\tilde{\Theta}_{k-h}
%-\frac{2\beta \underline{f}}{k}\sum_{l=k-h}^{k-1}\tilde{\Theta}_{k-h}^T\Phi_{l+1} \Phi^T_{l+1}\tilde\Theta_{k-h}\nonumber\\
%&-\frac{2\alpha \underline{g}}{k} \sum_{l=k-h}^{k-1}\tilde{\Theta}_{k-h}^T\mathcal{L}\otimes\psi_{l+1} \psi^T_{l+1}\tilde\Theta_{k-h}+O\(\frac{1}{k^{2-\nu}}\)\nonumber\\
\leq&\tilde{\Theta}_{k-h}^T\tilde{\Theta}_{k-h}
-\frac{2}{k}\tilde{\Theta}_{k-h}^T\mathbb{E}\bigg[\beta \underline{f}\sum_{l=k-h}^{k-1}\Phi_{l+1} \Phi^T_{l+1}\nonumber\\
&-\alpha \underline{g}\mathcal{L}\otimes\psi_{l+1} \psi^T_{l+1}\bigg|\mathcal{F}_{k-h}\bigg]\tilde\Theta_{k-h} +O\(\frac{1}{k^{2-\nu}}\),
\end{align}
where the last inequality is by $\|\tilde{\theta}_{l,i}-\tilde{\theta}_{k-h,i}\|=O\(\frac{1}{k^{1-\nu}}\)$ for all $l=k-h,\ldots,k-1$, got by Assumptions \ref{AR}-\ref{AD} and (\ref{AE}).

From Assumption \ref{AR}, Lemma 4.1 in \cite{XG2018S}, the coding rule in Algorithm \ref{Algorithm}, we have
\begin{align}%\label{thmac2}
&-\(\beta \underline{f}\sum_{l=k-h}^{k-1}\Phi_{l+1} \Phi^T_{l+1}+\alpha \underline{g}\mathcal{L}\otimes \sum_{l=k-h}^{k-1}\psi_{l+1} \psi^T_{l+1}\)\nonumber\\
\leq&- \(\beta \underline{f}\sum_{l=k-h}^{k-1}\Phi_{l+1}\Phi^T_{l+1}
+ \alpha \underline{g}h\delta_{\psi}^2\mathcal{L}\otimes I_{n}\)\nonumber\\
\leq&-\frac{\sigma}{\beta \underline{f}\delta_{\phi}^2} \lambda_{\min}\(\beta \underline{f} \sum_{i=1}^m\sum_{l=k-h}^{k-1}\phi_{l+1,i}\phi^T_{l+1,i}\)I_{nm}\nonumber\\
\leq&-\sigma h, \nonumber
\end{align}
which together with (\ref{thmac1}) gives
\begin{align}\label{thmac3}
\mathbb{E}[\tilde{\Theta}_{k}^T\tilde{\Theta}_{k}|\mathcal{F}_{k-h}]
\leq&\(1-\frac{2\sigma h}{k}\)\tilde{\Theta}_{k-h}^T\tilde{\Theta}_{k-h}+O\(\frac{1}{k^{2-\nu}}\).
\end{align}
Then, by Lemma 1.3.2 in \cite{S_Guo2020} and $\sum_{k=1}^{\infty}\frac{1}{k^{2-\nu}}<\infty$, $\tilde{\Theta}_{k}^T\tilde{\Theta}_{k}$ converges almost surely to a bounded limit.
From (\ref{thmac3}),
\[
\mathbb{E}[\tilde{\Theta}_{k}^T\tilde{\Theta}_{k}]
\leq\(1-\frac{2\sigma h}{k}\)\mathbb{E}[\tilde{\Theta}_{k-h}^T\tilde{\Theta}_{k-h}]+O\(\frac{1}{k^{2-\nu}}\).
\]
Noting $\sum_{k=1}^\infty\frac{2\sigma h}{k}=\infty$ and $\lim_{k\rightarrow\infty}\frac{1}{k^{1-\nu}}=0$, we have $\lim_{k\rightarrow\infty}\mathbb{E}[\tilde{\Theta}_{k}^T\tilde{\Theta}_{k}]=0$ based on Lemma 5.4 in \cite{WGZZ2024}. Then, there is a subsequence of $\tilde{\Theta}_{k}^T\tilde{\Theta}_{k}$ that almost surely converges to 0.  Therefore, $\tilde\theta_{k,i}$ almost surely converges to 0.

\textbf{Part II}: Almost sure convergence rate.\\

Let $V_k=\frac{\(\left\lceil\frac{k}{h}\right\rceil+1\)^{ 1-\nu}} {\(\ln\(\left\lceil\frac{k}{h}\right\rceil+1\)\)^{1+\tau}} \tilde{\Theta}_{k}^T\tilde{\Theta}_{k}$. Then, by  (\ref{thmac3}), we have
\begin{align}%\label{thmac4}
&\mathbb{E}[V_{k}|\mathcal{F}_{k-h}]\nonumber\\
\leq&\(1-\frac{2\sigma h}{k}\)\frac{\(\left\lceil\frac{k}{h}\right\rceil+1\)^{ 1-\nu}\cdot\big(\ln\big(\left\lceil\frac{k-h}{h}\right\rceil+1\big)\big)^{1+\tau} } {\(\left\lceil\frac{k-h}{h}\right\rceil+1\)^{ 1-\nu}\cdot \big(\ln\(\left\lceil\frac{k}{h}\right\rceil+1\)\big)^{1+\tau}}V_{k-h}\nonumber\\
&+O\(\frac{1}{k(\ln k)^{1+\tau}}\)
\leq O\(\frac{1}{k(\ln k)^{1+\tau}}\)\nonumber\\
&+\(1-\frac{2\sigma }{\left\lceil\frac{k}{h}\right\rceil}\)\(1+\frac{ 1-\nu} {\left\lceil\frac{k}{h}\right\rceil}+O\(\frac{1}{k^2}\)\) V_{k-h}\nonumber\\
\leq&\(1+\frac{1-\nu-2\sigma}{\left\lceil\frac{k}{h}\right\rceil}+O\big(\frac{1}{k^2}\big)\) V_{k-h}+O\(\frac{1}{k(\ln k)^{1+\tau}}\)\nonumber\\
\leq&\(1+O\(\frac{1}{k^2}\)\) V_{k-h}+O\(\frac{1}{k(\ln k)^{1+\tau}}\),\nonumber
\end{align}
where the  last inequality is by use of $2\sigma>1-\nu$.
Noticing that $\sum_{k=1}^{\infty}\frac{1}{k^2}<\infty$ and
$\sum_{k=1}^{\infty}\frac{1}{k(\ln k)^{1+\tau}}<\infty$, we learn that $V_k$ converges almost surely a finite constant. Therefore, we have $\|\tilde\theta_{k,i}\|^2
=O\(\frac{(\ln k)^{1+\tau}}{k^{1-\nu}}\)$, a.s., for all $i=1,\ldots,m$.
\end{proof}

To describe the communication bit-rate, we first provide the following definition.
Given time interval $[1, k] \cap \mathbb{N}$, the global average communication bit-rate is
$$ \mathrm{\kappa}(k)=\frac{\sum_{l=1}^k\sum_{(i, j) \in \mathcal{E}}\kappa_{i j}(l)}{k\sum_{i=1}^{m}d_i} $$
where $\kappa_{i j}(l)$ is the bit number that the sensor $i$ sends to the sensor $j$ at time $l$, and $d_i$ is the neighbor number of the sensor $i$. Then, the following theorem shows the global communication bit-rate of the proposed algorithm decays to zero at a polynomial rate.

\begin{thm}\label{thm_gcr}
If Assumptions \ref{AG}-\ref{ADP} hold, then the global average communication bit rate $\kappa(k)$ satisfies
 \begin{align}%\label{thm_c1}
\mathrm{\kappa}(k)=O\left(\frac{1}{k^{\nu}}\right), \text{a.s.}\nonumber
 \end{align}
\end{thm}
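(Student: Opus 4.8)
The plan is to reduce the statement to a per-step bound on the probability that the event-triggered mechanism fires, and then to upgrade that conditional bound to an almost sure statement about the time-averaged bit count. First I would pin down the bit bookkeeping: by Algorithm~\ref{Algorithm}, sensor $i$ broadcasts the single bit $z_{l,i}$ to all $d_i$ neighbours exactly when $\gamma_{l,i}^e=1$, and transmits nothing otherwise. Packet loss enters only through reception ($\gamma_{k,ij}^d$), not through what is sent, so it is irrelevant to $\kappa_{ij}(l)$; hence $\kappa_{ij}(l)=\gamma_{l,i}^e$. Summing over the outgoing edges of each sensor gives $\sum_{(i,j)\in\mathcal{E}}\kappa_{ij}(l)=\sum_{i=1}^{m}d_i\gamma_{l,i}^e$, so that
\[
\kappa(k)=\frac{\sum_{l=1}^{k}\sum_{i=1}^{m}d_i\gamma_{l,i}^e}{k\sum_{i=1}^{m}d_i}.
\]
It therefore suffices to show that the numerator is $O(k^{1-\nu})$ almost surely.

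Next I would bound the conditional triggering probability. Given $\mathcal{F}_{l-1}$, the quantity $a_{l,i}:=\psi_l^T\hat\theta_{l-1,i}$ is measurable and, since $\|\psi_l\|\le\bar\psi$ and the projection $\Pi_\Omega$ keeps $\hat\theta_{l-1,i}\in\Omega$ so that $\|\hat\theta_{l-1,i}\|\le\bar\theta$, it is bounded by $|a_{l,i}|\le\bar\psi\bar\theta$. The only remaining randomness is the Laplace dither $\omega_{l,i}$. Using the tail $\mathbb{P}(\omega_{l,i}>t)=\tfrac12 e^{-t}$ for $t>0$ and $\hat{C}_l=\nu\ln l$, for every $l$ beyond the finite index where $\hat{C}_l>\bar\psi\bar\theta$ one has
\[
\mathbb{P}(\gamma_{l,i}^e=1\mid\mathcal{F}_{l-1})
=\mathbb{P}\big(|a_{l,i}+\omega_{l,i}|>\hat{C}_l\mid\mathcal{F}_{l-1}\big)
=\tfrac12 e^{-\hat{C}_l}\big(e^{a_{l,i}}+e^{-a_{l,i}}\big)
\le\frac{\cosh(\bar\psi\bar\theta)}{l^{\nu}}.
\]
This bound holds surely (not merely in expectation), so the predictable ``drift'' part of the numerator satisfies $\sum_{l=1}^{k}\sum_{i}d_i\,\mathbb{E}[\gamma_{l,i}^e\mid\mathcal{F}_{l-1}]=O\big(\sum_{l=1}^{k}l^{-\nu}\big)=O(k^{1-\nu})$ almost surely, using $\nu<1$.

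For the fluctuation part I would set $X_l=\sum_{i}d_i\gamma_{l,i}^e$ and consider the martingale $M_k=\sum_{l=1}^{k}\big(X_l-\mathbb{E}[X_l\mid\mathcal{F}_{l-1}]\big)$, whose increments are bounded by $D:=\sum_i d_i$. Since $\gamma_{l,i}^e\in\{0,1\}$ gives $X_l\le D$ and hence $\mathbb{E}[X_l^2\mid\mathcal{F}_{l-1}]\le D\,\mathbb{E}[X_l\mid\mathcal{F}_{l-1}]=O(l^{-\nu})$, the normalized conditional variance series converges,
\[
\sum_{l}\frac{\mathbb{E}\big[(X_l-\mathbb{E}[X_l\mid\mathcal{F}_{l-1}])^2\mid\mathcal{F}_{l-1}\big]}{l^{2(1-\nu)}}
=\sum_{l}O\big(l^{-(2-\nu)}\big)<\infty ,
\]
because $2-\nu>1$. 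A martingale strong law of the same type already invoked in Theorem~\ref{thm_ac} (via Kronecker's lemma, Lemma 1.3.2 in \cite{S_Guo2020}) then yields $M_k/k^{1-\nu}\to0$ almost surely. Combining the drift and fluctuation bounds gives $\sum_{l=1}^{k}X_l=O(k^{1-\nu})$ a.s., whence $\kappa(k)=O(k^{1-\nu})/(kD)=O(k^{-\nu})$ a.s.

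The genuinely substantive step is the Laplace-tail computation, which is what converts the growing threshold $\hat{C}_l=\nu\ln l$ into the exact polynomial decay $l^{-\nu}$; everything else is bookkeeping. The only point requiring care is the passage from the in-expectation decay to the almost sure one: the normalizing sequence must be chosen as $k^{1-\nu}$ so that the martingale variance series converges, and this is precisely where the constraint $\nu<1$ is used. I do not expect the smoothing over the finitely many small-$l$ terms (where $\hat{C}_l\le\bar\psi\bar\theta$) to cause difficulty, since they contribute only an $O(1)$ amount that is absorbed into the $O(k^{1-\nu})$ estimate.
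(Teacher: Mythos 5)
Your proposal is correct and follows essentially the same route as the paper's proof: you bound the conditional triggering probability by $O(l^{-\nu})$ via the Laplace tail against the growing threshold $\hat{C}_l=\nu\ln l$ (the paper cites Lemma A.1 of \cite{KLZZ2024} for this step, while you compute it explicitly), sum the predictable part to $O(k^{1-\nu})$, and control the fluctuations with a martingale strong law (Chow plus Kronecker in your version, the weighted martingale-difference-sum estimate of \cite{S_Guo2020} in the paper's). The differences are purely in the choice of auxiliary lemma, not in the substance of the argument.
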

\begin{proof}
From the defination of $\kappa_{i j}(l)$, we have
\begin{align}
& \mathbb{P}(\kappa_{i j}(l)=1)=G(\psi_l^T \hat{\theta}_{l-1,i}-\hat{C}_l)+G(-\psi_l^T \hat{\theta}_{l-1,i}-\hat{C}_l),  \nonumber\\
& \mathbb{P}(\kappa_{i j}(l)=0)=1-G(\psi_l^T \hat{\theta}_{l-1,i}-\hat{C}_l)-G(-\psi_l^T \hat{\theta}_{l-1,i}-\hat{C}_l),  \nonumber
\end{align}
and $\kappa_{i j}(l)\in\mathcal{F}_{l}$ for all $(i,j)\in\mathcal{E}$.
Then, based on the boundedness of $\psi_l^T \hat{\theta}_{l-1,i}$ and  Lemma A.1 in \cite{KLZZ2024}, we have
\begin{align}\label{Ezeta}
\mathbb{E}[\kappa_{i j}(l)|\mathcal{F}_{l-1}]
&=G(\psi_l^T \hat{\theta}_{l-1,i}-\hat{C}_l)+G(-\psi_l^T \hat{\theta}_{l-1,i}-\hat{C}_l)\nonumber\\
&=O\(G(-\hat{C}_l)\)=O\(\frac{1}{k^{\nu}}\),
\end{align}
which yields that
\begin{align}\label{sumEzeta}
& \sum_{l=1}^{k}\mathbb{E}[\kappa_{i j}(l)|\mathcal{F}_{l-1}]
=O\(k^{1-\nu}\).
\end{align}
From (\ref{Ezeta}) and $\kappa_{i j}(l)=0$ or $1$, we have
\begin{align}
& \mathbb{E}[\|\kappa_{i j}(l)-\mathbb{E}[\kappa_{i j}(l)|\mathcal{F}_{l-1}]\|^{\rho}|\mathcal{F}_{l-1}]\nonumber\\
\leq &\mathbb{E}[\|\kappa_{i j}(l)-\mathbb{E}[\kappa_{i j}(l)|\mathcal{F}_{l-1}]\|^{2}|\mathcal{F}_{l-1}]\nonumber\\
=&\mathbb{E}[\kappa_{i j}(l)|\mathcal{F}_{l-1}]-\(\mathbb{E}[\kappa_{i j}(l)|\mathcal{F}_{l-1}]\)^{2}
=O\(\frac{1}{k^{\nu}}\),  \nonumber
\end{align}
for all $\rho\geq 2$.
Then, from the weighted martingale difference sum estimation theorem in \cite{S_Guo2020}, we have
\begin{align}\label{sumMDEzeta}
& \sum_{l=1}^{k}(\kappa_{i j}(l)-\mathbb{E}[\kappa_{i j}(l)|\mathcal{F}_{l-1}])\nonumber\\
\leq &\sum_{l=1}^{k}\(\frac{1}{k^{\nu}}\)^{1/\rho}\frac{\kappa_{i j}(l)-\mathbb{E}[\kappa_{i j}(l)|\mathcal{F}_{l-1}]}{\(\frac{1}{k^{\nu}}\)^{1/\rho}}\nonumber\\
=&O\(s_k(\rho)\ln (s_k^{\rho}(\rho)+e)\)
=o\(k^{1-\nu}\), \text{ a.s. },
\end{align}
where $s_k(\rho)=\(\sum_{l=1}^{k}\((\frac{1}{k^{\nu}}\)^{1/\rho})^{\rho}\)^{1/\rho}
=o\(k^{1-\nu}\)$.

Then, by (\ref{sumEzeta}) and (\ref{sumMDEzeta}), we have
\[
\mathrm{\kappa}(k)=\frac{\sum_{l=1}^k\sum_{(i, j)\in \mathcal{E}}\kappa_{i j}(l)}{k\sum_{i=1}^{m}d_i}
=O\(\frac{1}{k^{\nu}}\), \text{ a.s. }
\]
\end{proof}

\begin{rema}\label{RTO}
Theorems \ref{thm_ac} and \ref{thm_gcr} reveal a trade-off between convergence and communication rates. Specifically, choosing $\alpha$ and $\beta$ such that $2\sigma \geq 1 - \nu$ yields a convergence rate of $O\left(\sqrt{\frac{\ln k}{k^{1-\nu}}}\right)$, where the convergence rate is inversely proportional to the communication bit rate. This trade-off offers practical guidance for tuning communication rates to meet desired convergence performance.
\end{rema}

\section{Numerical example}\label{sec:sim}
This section will illustrate the effectiveness of the proposed algorithm by a numerical example. Moreover, this example will also illustrate the joint effect of the sensors, i.e., the sensors in the network can achieve the estimation task that cannot be realized by any individual sensor only through decaying bit information exchanging between sensors.

\begin{wrapfigure}{r}{0.2\textwidth}
	\centering
	\includegraphics[width=0.18\textwidth]{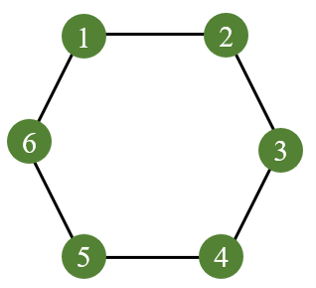}
	\caption{Communication graph.}
	\label{fig_graph}
\end{wrapfigure}

Consider a network composed of $m=6$ sensors, whose dynamics obeys the equation (\ref{M}) with $n=3$.
The communication graph $\mathcal{G}$ is set in Fig. \ref{fig_graph}.

In the dynamic model (\ref{M}), the unknown parameter $\theta=(1,-1,1)^{T}$, and its prior information is $\theta\in\Omega=[0,2]\times[-2,0]\times[0,2]$. The noise $d_{k,i}$ is i.i.d. Gaussian distribution $N(0,1)$.  Let the regressor $\phi_{k,i}$ be generated as
$\phi_{k,1}=\left(1-\frac{1}{3^k},0,0\right)$,
$\phi_{k,2}=\left(0,-1+\frac{1}{4^k},0\right)$,
$\phi_{k,3}=\left(0,0, 1-\frac{1}{2^k} \right)$,
$\phi_{k,4}=\left(-1+\frac{1}{2^k},0,0\right)$,
$\phi_{k,5}=\left(0,1-\frac{1}{2^k},0 \right)$,
$\phi_{k,6}=\left(0,0,-1+\frac{1}{5^k} \right)$.
One can verify that the regressor $\phi_{k,i}$ of the six sensors can cooperate to satisfy Assumption \ref{AR} with $h=1$. The packet loss $\gamma_{k,ij}^d$ follows Bernoulli distribution with $p=0.1$.

Then, we apply the proposed algorithm with the step size coefficients $\beta=70$ and $\alpha=20$ to give the estimate, where the linear coding rule $\psi_k$ in (\ref{Aen_z}) is sequential switching within the set $\{(1,0,0)^T, (0,1,0)^T,(0,0,1)^T\}$ and $\hat{C}_k=\nu\ln k$ in the event-triggered communication mechanism (\ref{Aga_e}) is set as $v=0,0.1,0.2,0.4,0.6$. And we repeat the simulation 100 times with the same initial values  $\hat{\theta}_{0,i}=(1/2,-1/2,1/2)$ to establish the empirical variance of estimation errors representing the mean square errors.

\begin{figure}[htbp]
	\centering
	\includegraphics[width=7.8cm]{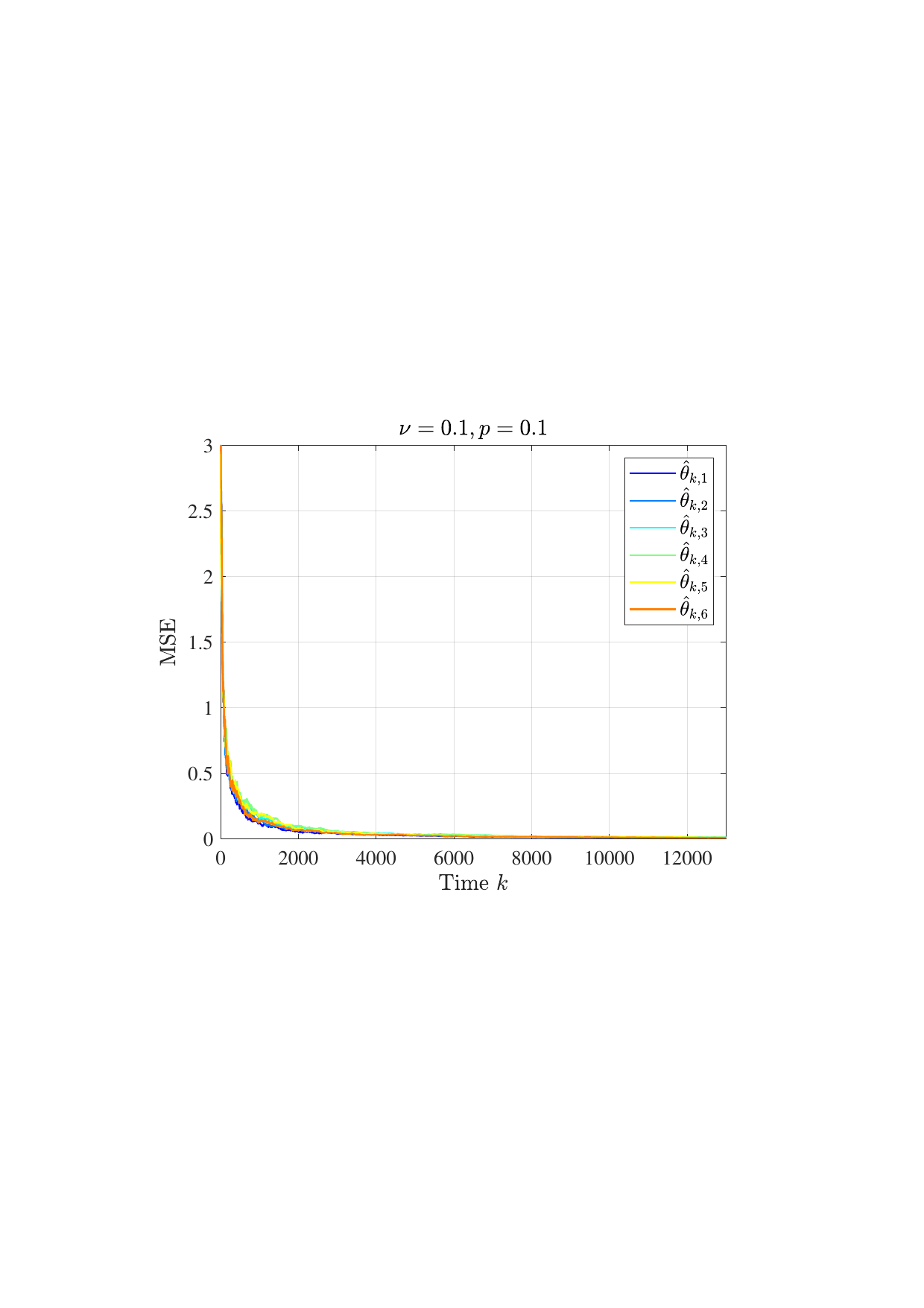}
	\caption{Convergence of Algorithm \ref{Algorithm} with $\nu=0.1$.}
	\label{fig_cp1}
\end{figure}

The mean square errors (MSE) of the proposed algorithm are shown in Fig. \ref{fig_cp1}, which shows the proposed algorithm can converge to the true parameter.
Fig. \ref{fig_cdbp1} shows the MSE trajectory comparison between Algorithm \ref{Algorithm} and its non-cooperative algorithm (i.e., $\hat{\theta}_{k,i}=\Pi_{\Omega}\{\hat{\theta}_{k-1,i}+\frac{\beta}{k} \phi_{k,i}(\hat{F}_{k,i}-s_{k,i})\}$). It shows the joint effect of the sensors under decaying communication bit-rate, i.e.,  the sensors in the network can achieve the estimation task that cannot be realized by any individual sensor only through a decaying bit-rate exchanging information between sensors.
\begin{figure}[htbp]
	\centering
	\includegraphics[width=7.8cm]{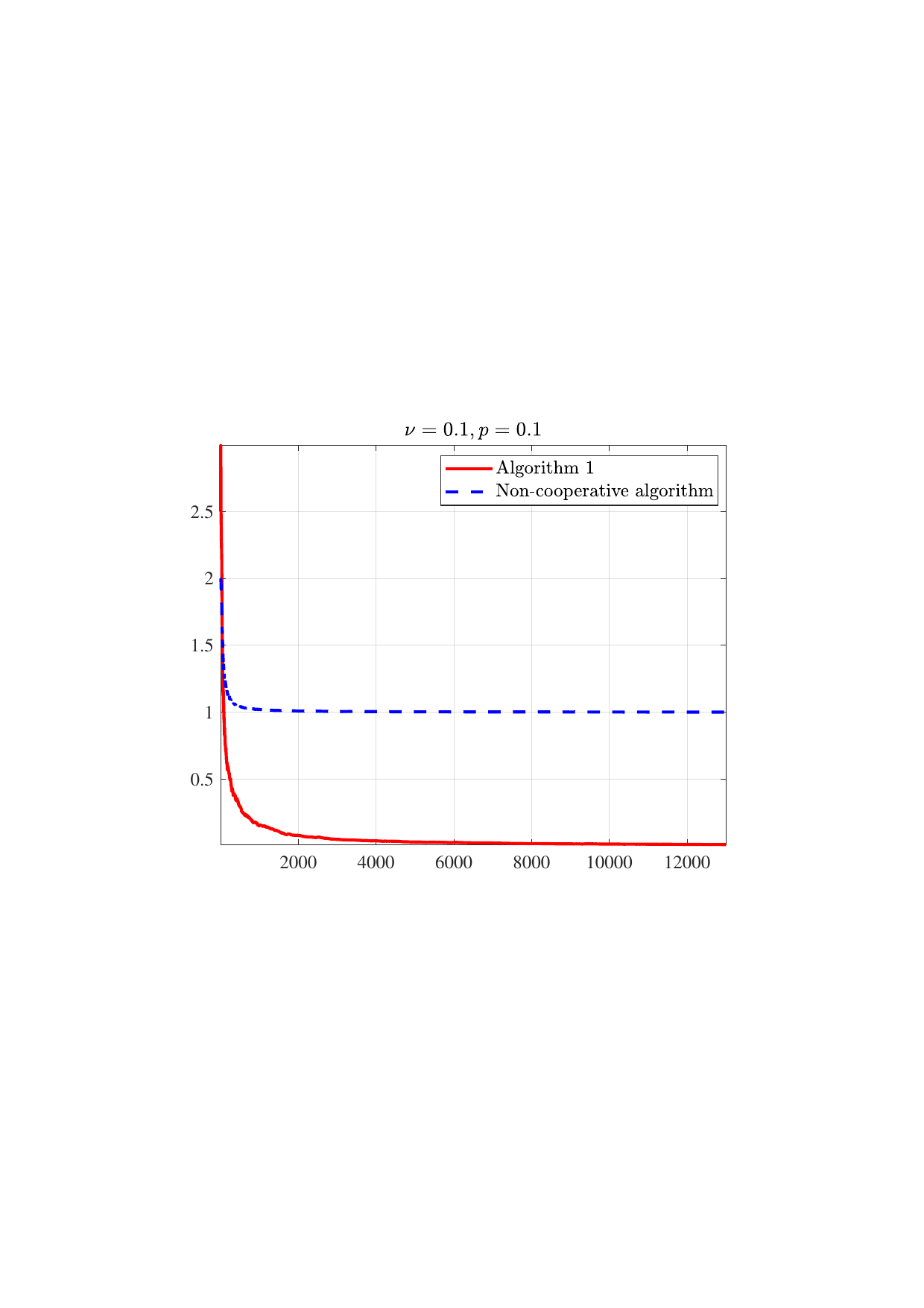}
	\caption{Comparison between Algorithm \ref{Algorithm} and non-cooperative algorithm.}
	\label{fig_cdbp1}
\end{figure}

In addition, Fig. \ref{fig_crp_cbrp} shows that the convergence rates and the global average communication bit-rate of the propose algorithm with different $\nu=0,0.1,0.2,0.4,0.6$. It demonstrates the trade-off between communication rates and convergence rates, which implies that a higher  communication rate leads to a faster convergence rate.

\begin{figure}[htbp]
	\centering
	\includegraphics[width=8.8cm]{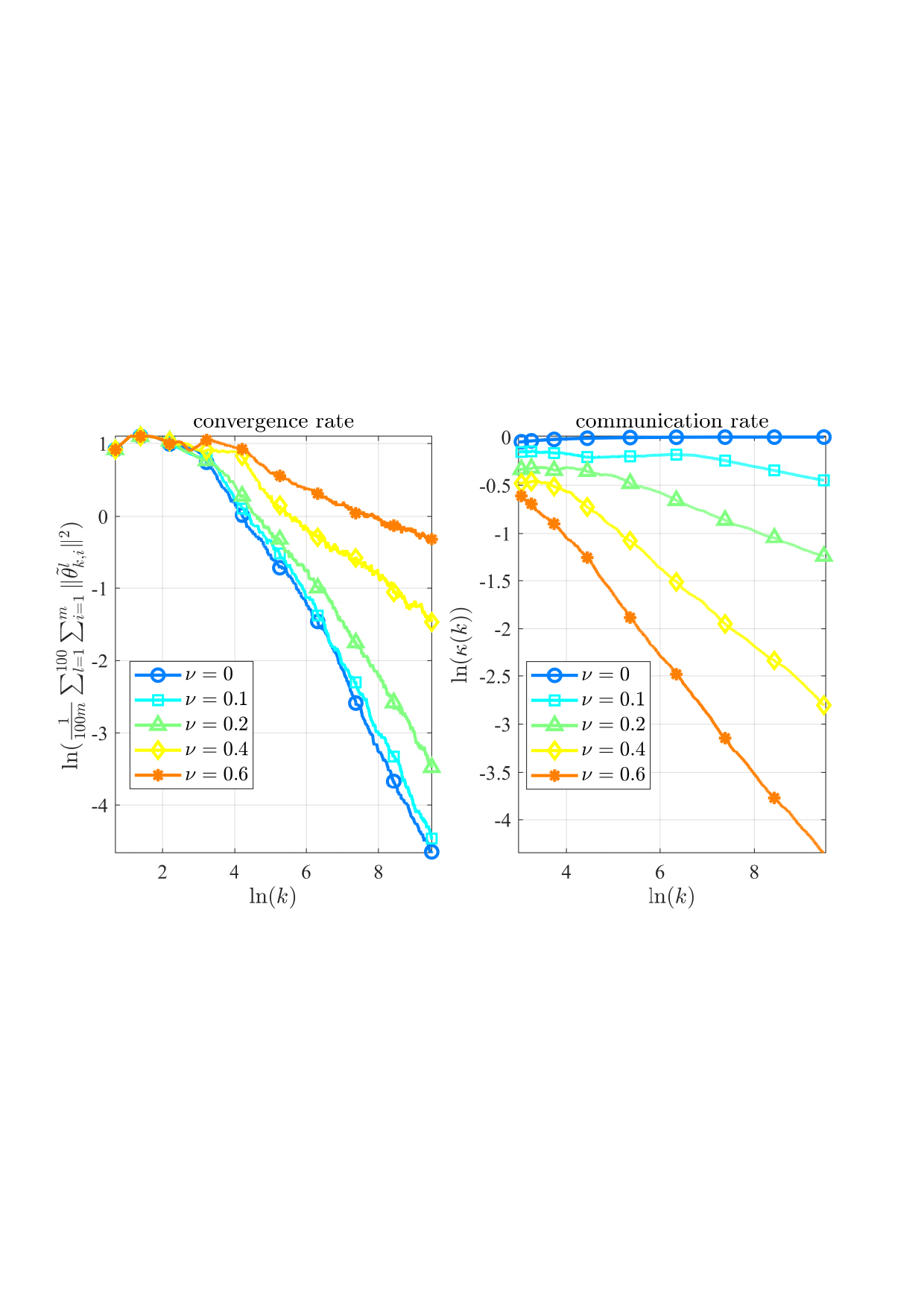}
	\caption{Convergence rates and average communication bitrates with different $\nu=0,0.1,0.2,0.4,0.6$.}
	\label{fig_crp_cbrp}
\end{figure}

\section{CONCLUSIONS}
This paper introduces a quantized distributed estimation algorithm  incorporating an event-triggered communication mechanism, capable of handling i.i.d. packet loss while relying only on a decaying global average communication bit-rate. Additionally, this paper establishes its almost sure convergence and analyzes its convergence rate. Furthermore, it explores the trade-off between communication rate and convergence rate,  offering guidance for communication design.
For future works, several intriguing directions can be explored, such as advanced event-triggered mechanism, and the interplay between quantized mechanism and event-triggered mechanisms.

\addtolength{\textheight}{-12cm}   % This command serves to balance the column lengths
                                  % on the last page of the document manually. It shortens
                                  % the textheight of the last page by a suitable amount.
                                  % This command does not take effect until the next page
                                  % so it should come on the page before the last. Make
                                  % sure that you do not shorten the textheight too much.

%%%%%%%%%%%%%%%%%%%%%%%%%%%%%%%%%%%%%%%%%%%%%%%%%%%%%%%%%%%%%%%%%%%%%%%%%%%%%%%%

%%%%%%%%%%%%%%%%%%%%%%%%%%%%%%%%%%%%%%%%%%%%%%%%%%%%%%%%%%%%%%%%%%%%%%%%%%%%%%%%

%%%%%%%%%%%%%%%%%%%%%%%%%%%%%%%%%%%%%%%%%%%%%%%%%%%%%%%%%%%%%%%%%%%%%%%%%%%%%%%%
%\section*{APPENDIX}

%Appendixes should appear before the acknowledgment.

%\section*{ACKNOWLEDGMENT}

%%%%%%%%%%%%%%%%%%%%%%%%%%%%%%%%%%%%%%%%%%%%%%%%%%%%%%%%%%%%%%%%%%%%%%%%%%%%%%%%

%\bibliographystyle{IEEEtran}
%\bibliography{mybib_ETQD_cdc.bib}

\end{document}